\DeclarePairedDelimiter\ceil{\lceil}{\rceil}
\newcommand{\A}{\mathcal{A}}
\newcommand{\B}{\mathcal{B}}
\newtheorem{theorem}{Theorem}
\newtheorem{lemma}[theorem]{Lemma}
\renewenvironment{proof}{\noindent {\bf Proof: }}{\QED\medskip}
\def\QED{{\hspace*{\fill}{\vrule height 1ex width 1ex }\quad} 
 \vskip 0pt plus20pt}
\newcommand{\be}{\begin{equation}}
\newcommand{\ee}{\end{equation}}
\newcommand{\bea}{\begin{eqnarray}}
\newcommand{\eea}{\end{eqnarray}}
\newcommand{\beann}{\begin{eqnarray*}}
\newcommand{\eeann}{\end{eqnarray*}}
\newcommand{\ket}[1]{\vert{#1}\rangle}
\newcommand{\bra}[1]{\langle{#1}\vert}
\newcommand{\unity}{{1\hskip -3pt \rm{I}}}
\newcommand {\Tr}{\text{Tr}}
\newcommand {\diag}{\text{diag}}
\begin{document}

\title{Capacity of a quantum memory channel correlated by matrix product states}

\author{Jaideep Mulherkar}%
 \email{jaideep\_mulherkar@daiict.ac.in}

\author{V Sunitha}
\email{v\_suni@daiict.ac.in.}
\affiliation{Dhirubhai Ambani Institute of Information and Communication Technology, Gandhinagar, India.}

\begin{abstract}
We study the capacity of a quantum channel where channel acts like controlled phase gate with the control being provided by a one-dimensional quantum spin chain environment.  Due to the correlations in the spin chain, we get a quantum channel with memory. We derive formulas for the quantum capacity of this channel when 	the spin state is a matrix product state. Particularly, we derive exact formulas for the capacity of the quantum memory channel when the environment state is the ground state of the AKLT model and the Majumdar-Ghosh model. We find that the behavior of the capacity for the range of the parameters is analytic.
\end{abstract}

\maketitle
\section{Introduction}
\label{sec:Introduction}
Quantum memory channels are quantum channels in which the noise effects are correlated. The quantum capacity of a quantum memory channel is the maximum information carrying capacity of the channel. This capacity has an operational meaning described in terms of protocols of entanglement transmission or by the amount of distillable entanglement. Computing the capacity of such a channel is a difficult optimization  task since the capacity formula is not given by a single letter formula. Instances of memory channels that occur in physical situations are an unmodulated spin chain \cite{BBMB2008} and a micro maser \cite{GDF2009}. A systematic study of quantum memory channels and a review of results has been done in \cite{KW2005, CGLM2014}. 

In this work we compute the capacities of quantum channels that have a many body environment. The set up is that Alice transmits a sequence of particles to Bob  and each particle interacts via a unitary with its own environmental particle. The environmental particles themselves are in the ground state of  a quantum many body environment like a spin chain. Due to the correlations in the chain, memory effects are introduced. Consequently, the resulting channel between Alice and Bob which is given by tracing out the environment, is a quantum memory channel. This set up was studied in detail in \cite{PV2007,PV2008} wherein conditions were derived under which the quantum capacity equals the regularized coherent information of the channel. Along with this it was also shown that a non-analytic behavior of the capacity was related to a phase-transition in the many-body environment. 

The work in this paper focuses on the case where the quantum many-body environment is a class of spin chains called matrix product states (MPS) or finitely correlated states \cite{FNW1992}. These states are important from the point of view of both quantum computation and condensed matter theory. It is known \cite{VC2004} that they are universal for measurement based quantum computation. Numerical techniques based on higher dimensional MPS (tensor networks) are also being used to study phenomenon such as high temperature superconductivity and  exotic phases of matter \cite{O2014}. MPS states are easy to describe and under some conditions they have nice properties like existence of a spectral gap and exponentially decaying correlations. The MPS formalism  arose out of work done by Majumdar-Ghosh (MG) \cite{MG1969} and Afflek-Kennedy-Lieb-Tasaki (AKLT) \cite{AKLT1988} on valence bond solid models. The quantum many body  environment for the channel in this paper is the ground states of these models.

The main results in this paper are the derivation of exact formulas for the quantum capacities of the quantum memory channels when the environment is in the ground state of the AKLT model and MG model. The work is organized as follows: In Section~\ref{sec:MPS} we give an overview of the matrix product states in terms of the theory of completely positive maps as formulated in \cite{FNW1992}. In  Section~\ref{sec:ChannelConst} we go over the construction of the memory channel. In Section~\ref{sec:AKLT} we derive the formula for the channel correlated by the AKLT and the MG state.

\section{Quantum spin chains and MPS: The AKLT and MG states }
\label{sec:MPS}
A quantum spin chain $\mathcal{A}_{\mathbb{Z}}$  is a suitable limit  of local spin algebras $\mathcal{A}_{[-n,n]}=\otimes_{j=-n}^n \mathcal{A}_i$ where $\mathcal{A}_i$ is the algebra at each site. A state $\omega$ on $\mathcal{A}_{\mathbb{Z}}$ is specified by local density matrices $\rho_{[1,n]}$ such that for any local observable $A_{[1,n]} \in \mathcal{A}_{[1,n]}$
\begin{equation*}
\omega(A_{[1,n]} ) = \Tr(\rho_{[1,n]}A_{[1,n]}).
\end{equation*}
A state $\omega$ is translation invariant if the following holds
\begin{equation*}
\Tr_{n+1}(\rho_{[1,n+1]}\unity_1 \otimes A_{[2,n+1]}) = \Tr(\rho_{[1,n]}A_{[1,n]}).
\end{equation*}
MPS are a class of translation invariant states over $\mathcal{A}_{\mathbb{Z}}$ that are defined by a triple ($\mathcal{B},\rho,\mathbb{E}$) where $\mathcal{B}$ is a matrix algebra, $\rho$ is a state on $\mathcal{B}$, and $\mathbb{E}:\mathcal{A}\otimes \mathcal{B}\mapsto \mathcal{B}$ is a completely positive unital map which has the Krauss representation
\begin{equation}
\label{eq:MPSCP}
\mathbb{E}(A\otimes B) = \sum_j V_j(A\otimes B)V_j^{\dagger},
\end{equation}
where $V_j:\A \otimes\B \rightarrow \B$.
Note that $\mathbb{E}^{(1)}(A):= \mathbb{E}(A\otimes \unity_{\mathcal{B}})$ defines a completely positive map from $\mathcal{A}$ to $\mathcal{B}$. We define $E^{(n)}$ as maps from $\mathcal{A}^{\otimes n}$ to $\mathcal{B}$ recursively as follows
\begin{equation*}
\mathbb{E}^{(n)}:= \mathbb{E}\circ(\unity_{\mathcal{A}}\otimes\mathbb{E}^{(n-1)}), \quad n\geq 1
\end{equation*}
It was shown in \cite{FNW1992} that given the triplet ($\mathcal{B},\rho,\mathbb{E}$) there exists a translation invariant  state  $\omega$ on $\mathcal{A}_{\mathbb{Z}}$ such that the local density matrices satisfy the condition
\begin{equation*}
\Tr(\rho_{[1,n]}A_{[1,n]}) = \Tr(\rho\mathbb{E}^{(n)}(A_{[1,n]})).
\end{equation*}
Translation invariance of $\omega$ yields the property
\begin{equation*}
\Tr(\rho\mathbb{E}(\unity_{\A}\otimes B))= \Tr(\rho B)
\end{equation*}
When the completely positive map $\mathbb{E}$ in equation (\ref{eq:MPSCP}) is pure, that is, it has only one Krauss operator, then
\begin{equation}
\label{eq:MPSCPPure}
\mathbb{E}(A\otimes B) = V(A\otimes B)V^{\dagger}.
\end{equation}
The state generated herein is called a purely generated MPS. For purely generated MPS there exists  an orthonormal set $\{\ket{\phi_i}\}$ and operators $A_i:\mathcal{B}\mapsto \mathcal{B}$ defined as
\begin{equation*}
V\ket{\phi_i \otimes \psi} = A_i\ket{\psi},
\end{equation*}
\begin{equation*}
V^{\dagger}\ket{\psi} = \sum_{i=1}^d \ket{\phi_i}\otimes A_i^{\dagger}\ket{\psi},
\end{equation*}
for $\ket{\psi} \in \mathcal{A}$.
From equation (\ref{eq:MPSCPPure}) we  get
\begin{equation*}
\mathbb{E}(\ket{\phi_i}\bra{\phi_j}\otimes B) = A_i B A_j^{\dagger},\, \text{for} \, B\in \mathcal{B}.
\end{equation*}
Furthermore the operators $A_i$  satisfy 
\begin{equation}
\label{eq:MPSPure1}
\sum_i A_i A_i^{\dagger} = \unity \quad \text{and} \quad \sum_i A_i^{\dagger} \rho A_i = \rho.
\end{equation}
It can be shown that the local density matrices $\rho_{[1,n]}$ are given by
\begin{eqnarray}
\label{eq:LocalDensity}
\rho_{[1,n]} &=& \sum_{\mathbf{s},\mathbf{t} } \ket{\phi_{\mathbf{s}}}\bra{\phi_{\mathbf{t}}}\Tr(A_{\mathbf{s}}^{\dagger}\rho A_{\mathbf{t}})\\
\text{where} \, \ket{\phi_{\mathbf{s}}} &=& \ket{\phi_{s_1}\otimes\phi_{s_2}\otimes \cdots \phi_{s_n}} \, \text{and} \, A_{\mathbf{t}} = A_{t_1}.A_{t_2}...A_{t_n}\nonumber
\end{eqnarray}
It was shown in \cite{FNW1992} that for purely generated MPS if there exists a unique $\rho$ satisfying equation $(\ref{eq:MPSPure1})$ then the resulting states are pure states over $\mathcal{A}_{\mathbb{Z}}$. Moreover they are ground states of translation invariant finite range Hamiltonians. They also have some interesting properties like exponentially decaying correlation functions and a ground state enery gap. The AKLT Hamiltonian is given by
\begin{equation}
\label{eq:HamAKLT}
H_{\text{AKLT}} = \sum_i \mathbf{S}_i\cdot\mathbf{S}_{i+1} + \frac{1}{3} (\mathbf{S}_i\cdot\mathbf{S}_{i+1})^2. 
\end{equation}
The AKLT Hamiltonian is a spin-1 chain and $\mathbf{S}_i = (S_i^x,S_i^y,S_i^z)$ are spin-1 observables corresponding to observing the spin at site i in the x,y and z directions respectively. The ground state of the AKLT Hamiltonan is unique and we look at the parametrized version of the AKLT state which has MPS representation given by the triplet ($\mathbb{E},\mathcal{B},\rho$) with
\begin{equation*}
\mathcal{B} = M_2(\mathbb{C}), \,  \rho = \frac{\unity_{\mathcal{B}}}{2}
\end{equation*}
and the map $\mathbb{E}$ determined by the operators $\{A_i\}_{i=1}^3$ where
\begin{equation}
\label{eq:OpAKLT}
A_1 = -\sin(\theta)\sigma_z, \, A_2 = \cos(\theta)\sigma^{+}, \, A_3 = -A_2 , \, \theta\in [0,\pi)
\end{equation}  
where $\sigma_i ,\ i\in \{x,y,z\}$ are the Pauli matrices and $\sigma^{+} = \sigma_x+i \sigma_y$. The ground state of the Hamiltonian in expression (\ref{eq:HamAKLT}) corresponds to $\theta = \cos^{-1}(\sqrt{\frac{2}{3}})$.
The Majumdar-Ghosh Hamiltonian is a spin-$\frac{1}{2}$ Hamiltonian given by 
\begin{equation}
\label{eq:HamMG}
H_{\text{MG}}   =  \sum_i \mathbf{\sigma}_i\cdot\mathbf{\sigma}_{i+1} + \frac{1}{2}\mathbf{\sigma}_i\cdot\mathbf{\sigma}_{i+2}
\end{equation}
It has two distinct ground states and an equal superposition of these two ground states has MPS representation with operators $\{A_i\}_{i=1}^2$ given by 
\begin{equation}
\label{eq:OpMG}
A_1 = \begin{pmatrix}0 & 1 & 0\\ 0 & 0 & -\sqrt{g}\\ 0 & 0 & 0\end{pmatrix} ,\, A_2 = \begin{pmatrix}0 & 0 & 0\\ \sqrt{1-g} & 0 & 0\\ 0 & 1 & 0\end{pmatrix}, 
\end{equation}
where $g\in [0,1)$ is a parameter and the ground state of the Hamiltonian (\ref{eq:HamMG}) corresponds to $g =\frac{1}{2}$. The invariant state $\rho$ is 
\begin{equation*}
\rho = \begin{pmatrix}\frac{1-g}{2} & 0 & 0\\0 & \frac{1}{2} & 0\\ 0 & 0 & \frac{g}{2}\end{pmatrix}.
\end{equation*}

\section{Construction of the memory channel and capacity formula}
\label{sec:ChannelConst}
A quantum channel is a completely positive trace preserving (CPTP) map $\Lambda:\mathcal{B}(\mathcal{H}_A)\rightarrow \mathcal{B}(\mathcal{H}_B)$ where $\mathcal{H}_A$ and $\mathcal{H}_B$ are Hilbert spaces of the input and output systems respectively. Every completely positive map has a Krauss form
\bea
\label{eq:CPTP1}
\Lambda(\rho) = \sum_k V_k \rho V_k^{\dagger},
\eea
where the linear operators $V_k:\mathcal{H}_A \rightarrow \mathcal{H}_B$ satisfy $\sum_k V_k^{\dagger}V_k =\unity$. By Stinesprings' dilation theorem there exists a Hilbert space referred to as environment $\mathcal{H}_E$ and an isometry $V:\mathcal{H}_A \rightarrow \mathcal{H}_B \otimes \mathcal{H}_E$ such that the map $\Lambda$ is given by
\bea
\label{eq:CPTP2}
\Lambda(\rho) = \text{Tr}_E(V\rho V^{\dagger}),
\eea
where we have taken the partial trace over the environment system. If we take the partial trace over the system $B$ instead of the environment we get the complementary map $\tilde{\Lambda}:\mathcal{H}_A \rightarrow \mathcal{H}_E$ which is also CPTP
\bea
\label{eq:CPTP3}
\tilde{\Lambda}(\rho) = \text{Tr}_B(V\rho V^{\dagger}).
\eea
Due to equations $(\ref{eq:CPTP1}), (\ref{eq:CPTP2}), (\ref{eq:CPTP3})$ we can write the output state of the complementary channel as
\bea
\label{eq:CompChannel}
\tilde{\Lambda}(\rho) = [\Tr(V_j^{\dagger}\rho V_k)]
\eea

Memory less channels are channels where the noise acts independently on each input state. Multiple uses of a memory less channel $\Lambda$ is given by the tensor product $\Lambda^{\otimes n}$. In situations where the tensor product structure of the multi use of the channel does not hold,  we get a channel with memory. In a quantum memory channel besides the input and output, there is a third system $\mathcal{M}$ which  represents the state of the memory. The channel operates on the input state and the state of the memory yielding an output state and a new state of the memory. Thus a quantum memory channel is represented by a CPTP map $\Phi:\mathcal{A}\otimes \mathcal{M}\rightarrow \mathcal{M} \otimes \mathcal{B}$. The action of n successive uses of this channel $\Phi$ is  given by the channel $\Phi^{(n)}:\mathcal{A}^{\otimes n} \otimes \mathcal{M}\rightarrow \mathcal{M} \otimes \mathcal{B}^{\otimes n}$ where 
\beann
\Phi^{(n)} = (\unity_{\mathcal{B}}^{\otimes n-1} \otimes \Phi)\circ (\unity_{\mathcal{B}}^{\otimes n-2} \otimes \Phi &\otimes& \unity_{\mathcal{A}})\circ\cdots\\
&&\cdots\circ(\Phi\otimes\unity_{\mathcal{A}}^{\otimes n-1}).
\eeann
The final output state can be determined by performing a partial trace over the memory system.

\begin{figure}
\includegraphics[height=8cm, width=8cm]{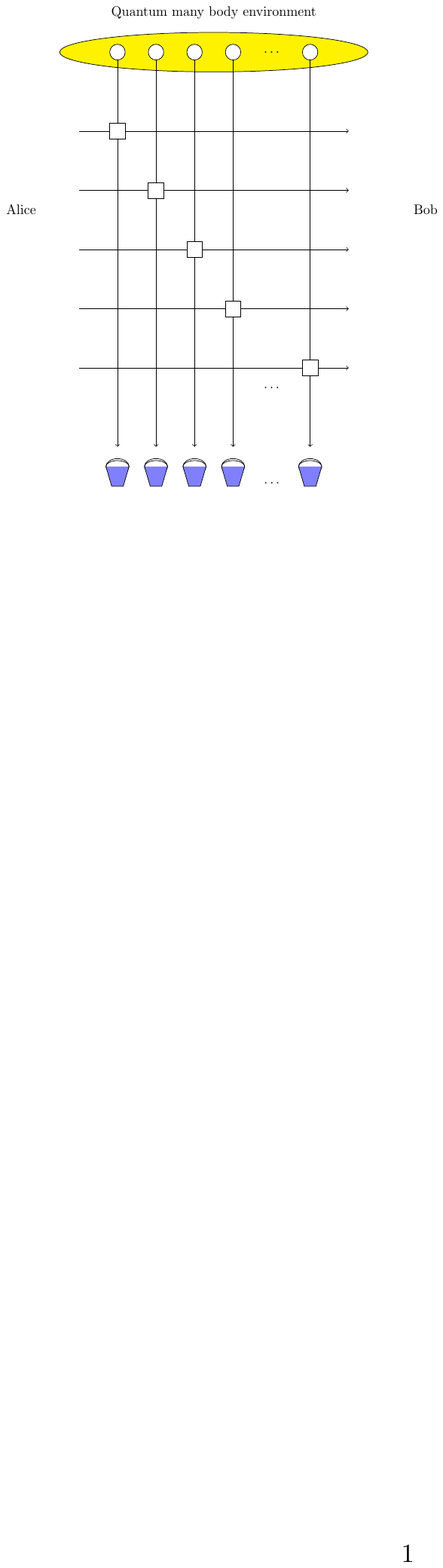}
\caption{Channel construction: Alice transmits a sequence of particles to Bob  and each particle interacts via a unitary (S controlled phase gate) with its own environmental particle. The environmental particles themselves are in the ground state of  a quantum many body environment like a spin chain. Due to the correlations in the chain, memory effects are introduced and the resulting channel which is given by tracing out the environment is a quantum memory channel.}
\label{fig:ChannelConst}
\end{figure}
The memory considered in this paper is a state of a quantum many body system like state of a quantum spin chain. The memory $\mathcal{M}$ consists of a tensor product of spin algebras at each site, that is, $\mathcal{M}=\otimes_{i=1}^n \mathcal{M}_i$ where $\mathcal{M}_i$ is the algebra at site $i$. The memory state is assumed to be initially in an entangled state of the algebra $\mathcal{M}$. The memory channel is created by a family of local unitaries,
\begin{equation*}
U_{q_jM} = \unity_{m_n}\otimes\cdots\otimes\unity_{m_{j+1}}\otimes U_{q_j m_j}\otimes\unity_{m_{j-1}}\otimes \cdots \otimes\unity_{m_1}
\end{equation*}
with each unitary acting on the the channel input and the local spin algebra. The final state of the channel is given by
\begin{eqnarray*}
\Phi^{(n)}(\rho^{(n)}) &=& \Tr_M[U_{q_n m_n}\otimes\cdots \\
&&\cdots \otimes U_{q_1 e_1}(\rho^{(n)}\otimes \omega_M)U^{\dagger}_{q_n m_n}\otimes\cdots \otimes U^{\dagger}_{q_1 e_1}] \nonumber
\end{eqnarray*}
See Figure \ref{fig:ChannelConst}.

We study the special case when the local unitary is a generalized controlled phase gate. For the environment of spin-$\frac{1}{2}$ particles the local unitaries are given by the controlled Z-gate
\begin{equation*}
U_{q_j m_j}= \begin{pmatrix}1 & 0 & 0 & 0\\
               0 & 1 & 0 & 0\\
							 0 & 0 & 1 & 0\\
							 0 & 0 & 0 & -1
\end {pmatrix}.
\end{equation*}
For a d-level environment system the unitary is a generalized phase gate given by
\begin{equation*}
\sum_{k=0}^{d-1} \ket{k}\bra{k}\otimes Z(k),
\end{equation*}
where $Z(k)  =\sum_j \exp(i2\pi kj/d)\ket{j}\bra{j}$. For this channel, if the environment is in state
\begin{equation*}
\sum_{\mathbf{x},\mathbf{y}} \rho_{\mathbf{x},\mathbf{y}}\ket{x_1,x_2,...,x_N}\bra{y_1,y_2,...,y_N}
\end{equation*}
then the action of controlled Z-gate leads to the following dephasing channel
\begin{equation*}
\label{eq:Memchanrep}
\sigma \mapsto \sum_{\mathbf{x}} \rho_{\mathbf{x},\mathbf{x}} Z_1^{x_1}Z_2^{x_2}...Z_N^{x_N}\sigma (Z_1^{x_1}Z_2^{x_2}...Z_N^{x_N})^{\dagger} 
\end{equation*}
Computing the capacities of dephasing channels was also done previously in \cite{DBF2007,LB010,ATL2010}. In \cite{PV2008} it was shown that the quantum capacity of the quantum memory channel with many body environment of a MPS equals the regularized coherent information
\begin{equation*}
Q^c(\Phi) = \lim_{n \rightarrow\infty} \frac{I(\Phi^{(n)})}{n},
\end{equation*}
where 
\begin{equation*}
I(\Phi) = \sup_\rho S(\Phi(\rho)) - S(\tilde{\Phi}(\rho)).
\end{equation*}
Furthermore, in \cite{PV2008}, it was shown that for the quantum memory channel with a many body environment given by a MPS state $\rho_{env}$ the quantum capacity is given by
\begin{equation}
\label{eq:CapacityFormula}
Q^c(\Phi) = \log(d) - \lim_{n\rightarrow\infty} \frac{S(\text{diag}(\rho_{env}))}{n},	
\end{equation}
where $d$ is the dimension of the local spin algebra.

\section{Capacity of channel correlated with the AKLT and MG states}
\label{sec:AKLT}
In the previous section we saw that the capacity of the quantum memory channel can be computed from  the diagonal elements of the local density matrix of the MPS environment. From equation ($\ref{eq:LocalDensity}$) the diagonal elements of local densities $\rho_{(n)}$ of an MPS described by operators $\{A_i\}_{i=1}^l$ and invariant state $\rho$ are given by 
\begin{equation}
\label{eq:diagMPS} 
\text{diag}(\rho_{(n)})= [\Tr(A_{i_n}^{\dagger}A_{i_{n-1}}^{\dagger}...A_{i_1}^{\dagger}\rho A_{i_1}A_{i_2}...A_{i_n})]
\end{equation} 
Now, if $\{A_i\}_{i=1}^l$ are simultaneously diagonalizable, with $A_i = U D_i U^{\dagger}$ for some unitary $U$ then the problem becomes tractable since in this case we get
\begin{equation*} 
\text{diag}(\rho_{(n)})= [\Tr((D_1^{\dagger}D_1)^{k_1}(D_2^{\dagger}D_2)^{k_2}\cdots (D_l^{\dagger}D_l)^{k_l}U \rho U^{\dagger})]
\end{equation*}
where $k_1,k_2,...,k_l$ are the number of occurrences of the operators $A_1,A_2,.., A_l$ in equation (\ref{eq:diagMPS}), respectively. In \cite{PV2008} a solution for this problem was given when the matrices $A_i\otimes A_i^{\dagger}$ are rank-1 matrices by reducing the analysis to the solution of a classical 1D Ising model. If the MPS operators $\{A_i\}_{i=1}^l$are not of the form described in these special cases then the problem is more difficult. Consider the classical probability distribution as
\begin{eqnarray}
\label{eq:classicaldist}  
&& p(X_1=i_1,X_2=i_2....,X_n = i_n) =\\
&& \qquad\qquad\qquad \Tr(A_{i_n}^{\dagger}A_{i_{n-1}}^{\dagger}...A_{i_1}^{\dagger}\rho A_{i_1}A_{i_2}...A_{i_n}) \nonumber
\end{eqnarray}
The translation invariance of the MPS states carries over to this classical probability distribution. The limit $ \lim_{n\rightarrow\infty} \frac{H(X_1,X_2,...,X_n)}{n}$ is the entropy rate of the stochastic process given by the equation (\ref{eq:classicaldist}). Thus, by equation (\ref{eq:CapacityFormula}), computing the diagonal entropy of the MPS state is equivalent to computing the entropy rate of this classical probability distribution. The limit exists when the process is stationary \cite{CT2006} and we have 
\begin{lemma}
\begin{equation*}
 Q^c(\Phi) = \log(d) - \lim_{n\rightarrow\infty} \frac{H(X_1,X_2,...,X_n)}{n}.
\end{equation*} 
\end{lemma}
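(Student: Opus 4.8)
The plan is to reduce the statement to the capacity formula (\ref{eq:CapacityFormula}) by identifying the diagonal of the local density matrix of the MPS environment with the law of the classical process (\ref{eq:classicaldist}), and then invoking the existence of the entropy rate for a stationary source.

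First I would compare (\ref{eq:diagMPS}) with (\ref{eq:classicaldist}) and observe that the entry of $\diag(\rho_{(n)})$ indexed by the multi-index $(i_1,\dots,i_n)$ is exactly $p(X_1=i_1,\dots,X_n=i_n)$. A diagonal density matrix has its diagonal entries as eigenvalues, so its von Neumann entropy equals the Shannon entropy of that list of numbers:
\[
S(\diag(\rho_{(n)})) = H(X_1,\dots,X_n).
\]
Thus the two limits appearing in (\ref{eq:CapacityFormula}) and in the Lemma are literally the same object, provided (\ref{eq:classicaldist}) is a genuine probability distribution.

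Second, I would verify that (\ref{eq:classicaldist}) does define a probability distribution and a stationary process. Positivity of each term follows by cyclicity of the trace: each term equals $\Tr(M^{\dagger}\rho M)$ with $M = A_{i_1}A_{i_2}\cdots A_{i_n}$, which is nonnegative since $\rho\ge 0$. Normalization $\sum p = 1$ follows by successively applying $\sum_i A_i A_i^{\dagger} = \unity$ from (\ref{eq:MPSPure1}) together with $\Tr\rho = 1$; these are precisely the relations that make each $\rho_{(n)}$ a bona fide state. Consistency of the finite marginals — summing out $X_n$ returns the law of $(X_1,\dots,X_{n-1})$ — is the classical shadow of the compatibility and translation invariance of the local density matrices $\rho_{[1,n]}$ established in \cite{FNW1992} (using $\sum_i A_i^{\dagger}\rho A_i = \rho$). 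Hence the $X_i$ form a genuine stationary stochastic process.

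Finally, because the process is stationary, the entropy rate $\lim_{n\to\infty} H(X_1,\dots,X_n)/n$ exists by the standard subadditivity argument for stationary sources \cite{CT2006}. Substituting $S(\diag(\rho_{env}))\rightsquigarrow H(X_1,\dots,X_n)$ into (\ref{eq:CapacityFormula}) gives $Q^c(\Phi) = \log(d) - \lim_{n\to\infty} H(X_1,\dots,X_n)/n$, which is the claim. I do not expect a genuine obstacle here; the only step needing care is the bookkeeping that (\ref{eq:classicaldist}) is normalized and stationary, which is exactly where the MPS relations (\ref{eq:MPSPure1}) and the translation invariance of $\omega$ are used.
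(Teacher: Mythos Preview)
Your proposal is correct and follows exactly the same route as the paper: identify the diagonal of the local MPS density matrix with the law of the classical process (\ref{eq:classicaldist}), observe that stationarity of this process is inherited from the translation invariance of the MPS via the relations (\ref{eq:MPSPure1}), and then substitute $S(\diag(\rho_{env})) = H(X_1,\dots,X_n)$ into (\ref{eq:CapacityFormula}) using the existence of the entropy rate for stationary sources \cite{CT2006}. In fact your write-up is more detailed than the paper's, which leaves the lemma as an immediate consequence of the preceding discussion without a separate proof.
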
 
A closed form formula for the entropy rate of a stationary stochastic process exists when a stationary process is Markov \cite{CT2006} or a hidden Markov process of a special kind \cite{MMN2012}. In general, obtaining a closed form solution for the entropy rate of a general stationary stochastic process is known to be hard. In what follows we compute the entropy rate of the stationary stochastic process generated by the diagonal elements of the AKLT state and the MG state by explicitly computing the diagonal elements.
Let the local densities of the ground state of the AKLT model and MG model  be denoted by $\rho_{\text{AKLT}}^{(n)}(\theta)$ and $\rho_{\text{MG}}^{(n)}(\theta)$respectively. For AKLT model we have the following lemma
\begin{lemma}
\label{lem:AKLTlemma}
The non-zero diagonal elements of  $\rho_{\text{AKLT}}^{(n)}(\theta)$ are given by $\lambda_p, \, p \in \{0..n\}$ where 
\begin{eqnarray*}
\lambda_p	&=& \frac{\sin^{2p}(\theta) \cos^{2(n-p)}(\theta)}{2}  \\
&&\text{with multiplicity $2\binom{n}{p}$}\text{for}\, p \in \{0..n-1\}\\
\lambda_n &=& \sin^{2n}(\theta) \quad \text{with multiplicity 1}
\end{eqnarray*}
\end{lemma}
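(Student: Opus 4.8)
The plan is to evaluate the diagonal entries in (\ref{eq:diagMPS}) by direct computation, using the fact that the three AKLT matrices in (\ref{eq:OpAKLT}) generate a very small multiplicative structure, and then to read off the multiplicities combinatorially. First I would substitute $\rho=\unity_{\B}/2$: since $\rho$ is a multiple of the identity, the entry of $\diag\!\big(\rho_{\text{AKLT}}^{(n)}(\theta)\big)$ labelled by a word $\mathbf i=(i_1,\dots,i_n)\in\{1,2,3\}^n$ collapses to
\begin{equation*}
\Tr\!\big(A_{i_n}^{\dagger}\cdots A_{i_1}^{\dagger}\,\rho\,A_{i_1}\cdots A_{i_n}\big)=\tfrac12\,\Tr\!\big(M_{\mathbf i}^{\dagger}M_{\mathbf i}\big)=\tfrac12\,\|M_{\mathbf i}\|^{2},\qquad M_{\mathbf i}:=A_{i_1}A_{i_2}\cdots A_{i_n},
\end{equation*}
with $\|\cdot\|$ the Hilbert--Schmidt norm; so the problem reduces to identifying the $2\times2$ operators $M_{\mathbf i}$.

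Next I would record the multiplication table of the generators. Up to the scalars $-\sin\theta$ (carried by $A_1$) and $\pm\cos\theta$ (carried by $A_2,A_3$) these are $\sigma_z$ and the two ladder operators $\sigma^{\pm}$, and they satisfy $\sigma_z^{2}=\unity$, $(\sigma^{\pm})^{2}=0$, $\sigma_z\sigma^{\pm}=-\sigma^{\pm}\sigma_z=\pm\sigma^{\pm}$, while $\sigma^{+}\sigma^{-}$ and $\sigma^{-}\sigma^{+}$ are the two rank-one diagonal projections (each fixed up to a sign under multiplication by $\sigma_z$). The consequence I want is: in $M_{\mathbf i}$ each $\sigma_z$-factor contributes only an overall sign; the word vanishes as soon as two ladder operators of the \emph{same} type occur consecutively in the induced subsequence of $\sigma^{\pm}$'s; and whenever it does not vanish it equals a sign times one of $\unity$, $\sigma_z$ (possible only when no ladder operator appears) or one of $\sigma^{+},\sigma^{-},\sigma^{+}\sigma^{-},\sigma^{-}\sigma^{+}$ (when at least one does). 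Since $\|\unity\|^{2}=\|\sigma_z\|^{2}=2$ whereas the other four have squared norm $1$, reinstating the scalars gives: the diagonal entry equals $\sin^{2n}\theta$ when every $i_j=1$; it equals $\tfrac12\sin^{2p}\theta\cos^{2(n-p)}\theta$ when exactly $p<n$ of the $i_j$ equal $1$ and the $n-p$ ladder operators strictly alternate; and it is $0$ otherwise.

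Finally I would count the words producing each value. The all-ones word is the only one with $p=n$, giving $\lambda_n=\sin^{2n}\theta$ with multiplicity $1$. For $0\le p\le n-1$ one picks the $p$ sites carrying $A_1$ in $\binom np$ ways and then observes that on the remaining $n-p\ge1$ sites exactly two assignments of ``$2$ versus $3$'' make the ladder sequence alternate (the one beginning with $\sigma^{+}$ and the one beginning with $\sigma^{-}$), the interspersed $\sigma_z$'s being irrelevant; hence multiplicity $2\binom np$ for $\lambda_p=\tfrac12\sin^{2p}\theta\cos^{2(n-p)}\theta$, which is the assertion. (As a check, $\sum_{p=0}^{n-1}2\binom np\lambda_p+\lambda_n=\sum_{p=0}^{n}\binom np\sin^{2p}\theta\cos^{2(n-p)}\theta=1$.) I expect the only genuinely delicate point to be the middle step: a careful argument that a word in $\{\sigma_z,\sigma^{+},\sigma^{-}\}$ vanishes precisely when its ladder-operator subsequence fails to alternate, and that in every surviving case the reduced operator has one of just two Hilbert--Schmidt norms, with the pure-$\sigma_z$ word (norm-squared $2$, not $1$) being the lone exception --- which is exactly why the $p=n$ term is listed separately as a single diagonal element equal to $\sin^{2n}\theta$. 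The sign bookkeeping coming from $\sigma_z\sigma^{\pm}=-\sigma^{\pm}\sigma_z$ is tedious but never affects a norm, so it can be dispatched in passing.
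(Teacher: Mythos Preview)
Your argument is correct and follows essentially the same route as the paper: both proofs exploit that $A_1\propto\sigma_z$ can be commuted through the word at the cost of a sign, that $A_2^2=A_3^2=0$ forces the ladder subsequence to alternate, and that the surviving operator is a rank-one projector (for $p<n$) or $\pm\unity,\pm\sigma_z$ (for $p=n$), after which the multiplicities $2\binom{n}{p}$ and $1$ follow by the same counting. The only organisational difference is that the paper reduces the sandwich $O_1\cdots O_nO_n^{\dagger}\cdots O_1^{\dagger}$ layer by layer via the identities $A_iA_i^{\dagger}=\cdots$ and $A_1XA_1^{\dagger}=A_1A_1^{\dagger}X$ for diagonal $X$, whereas you first rewrite the diagonal entry as $\tfrac12\|M_{\mathbf i}\|^{2}$ and then classify $M_{\mathbf i}$ itself using the Pauli multiplication table; this is a cosmetic repackaging rather than a different idea.
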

\begin{proof}
From equation (\ref{eq:LocalDensity}) and the cyclicity of the trace the diagonal elements of local density matrices $\rho_{\text{AKLT}}^{(n)}(\theta)$ are given by
\begin{equation}
\label{eq:diagelement}
\Tr(\rho O_1O_2...O_n O_n^{\dagger}O_{n-1}^{\dagger}...O_1^{\dagger}) 
\end{equation}
where $\rho = \frac{\unity}{2}$ and the operators $O_i \in \{A_1,A_2,A_3\}$ are as in equation (\ref{eq:OpAKLT}). We denote $\gamma_1 = \unity$ , $\gamma_2 = \begin{pmatrix} 1& 0 \\ 0 &0 \end{pmatrix}$ and $\gamma_3 = \begin{pmatrix} 0 & 0 \\ 0 & 1 \end{pmatrix}$ and note the following relations:
\begin{equation}
\label{eq:AKLTrel1}
A_1 A_1^{\dagger} = \sin^2(\theta)\gamma_1,
\end{equation} 
\begin{equation}
\label{eq:AKLTrel2}
A_2A_2^{\dagger} = \cos^2(\theta)\gamma_2,
\end{equation}
\begin{equation}
\label{eq:AKLTrel3}
A_3A_3^{\dagger} = \cos^2(\theta)\gamma_3, 
\end{equation}
\begin{equation}
\label{eq:AKLTrel4}
A_2^2 = 0, \,  A_3^2 = 0,
\end{equation}
\begin{eqnarray}
\label{eq:AKLTrel5}
&&A_1O_1...O_k O_k^{\dagger}...O_1^{\dagger}A_1^{\dagger} = A_1A_1^{\dagger} O_1...O_k O_k^{\dagger}...O_1^{\dagger} \\ 
&&\qquad\qquad  O_i \in \{A_1,A_2,A_3\} ,\,\ k\in \{1..n\},\nonumber
\end{eqnarray}
\begin{equation}
\label{eq:AKLTrel6}
A_2\gamma_2 A_2^{\dagger} =\cos^2(\theta)\gamma_3, \quad A_3\gamma_3 A_3^{\dagger} =\cos^2(\theta)\gamma_2. 
\end{equation}
Let $p$ be the number of occurrences of operator $A_1$ in equation (\ref{eq:diagelement}), then because of equations (\ref{eq:AKLTrel1}) and (\ref{eq:AKLTrel5}) we have 
\begin{eqnarray}
&&\Tr(\rho O_1 O_2...O_n O_n^{\dagger}O_{n-1}^{\dagger}...O_1^{\dagger}) =\\
&&\qquad  \Tr(\rho \underbrace{A_1A_1^{\dagger}...A_1A_1^{\dagger}}_{\text{p times}}O_{i_1}...O_{i_{n-p}}O_{i_{n-p}}^{\dagger}...O_{i_1}^{\dagger}), \nonumber
\end{eqnarray} 
where the operators $O_{i_k}$ is $A_2 \, \text{or} \,A_3$. From (\ref{eq:AKLTrel1}) we get 
\begin{eqnarray}
\label{eq:AKLTrel7}
&&\Tr(\rho O_1 O_2...O_n O_n^{\dagger}O_{n-1}^{\dagger}...O_1^{\dagger}) =\\
&& \qquad\quad \Tr(\rho \sin^{2p}(\theta)\unity O_{i_1}...O_{i_{n-p}}O_{i_{n-p}}^{\dagger}...O_{i_1}^{\dagger}).\nonumber
\end{eqnarray}
Due to relation (\ref{eq:AKLTrel4}) the  operators $A_2$ and $A_3$ must occur alternately in the sequence  $O_{i_1}...O_{i_{n-p}}$ on the right side of equation (\ref{eq:AKLTrel7}). Now, using equations (\ref{eq:AKLTrel2}),(\ref{eq:AKLTrel3}) and (\ref{eq:AKLTrel6}) we get
\begin{eqnarray*}
&&\Tr(\rho O_1 O_2...O_n O_n^{\dagger}O_{n-1}^{\dagger}...O_1^{\dagger}) =\\
&& \qquad \quad\Tr(\rho \sin^{2p}(\theta)\cos^{2(n-p)}(\theta)\gamma_i)\nonumber
\end{eqnarray*}
where $\gamma_i = \gamma_1$ if $p=n$ otherwise $\gamma_i$ is either $\gamma_2$ or $\gamma_3$. If $p=n$ we get exactly one diagonal element
\begin{equation*}
\lambda_n = \sin^{2n}(\theta). 
\end{equation*}
If $p<n$ then we get diagonal elements of the form 
\begin{equation*}
\lambda_p	= \frac{\sin^{2p}(\theta) \cos^{2(n-p)}(\theta)}{2}.
\end{equation*} 
If $p < n$ then the number of ways that one can place these $A_1$ operators in the sequence $O_1 O_2...O_n$ is $\binom{n}{p}$. Since the remaining  $A_2$ and $A_3$ operators occur alternately, there are only $2$ ways to place them and hence the multiplicity of  $\lambda_p$ in this case is $2\binom{n}{p}$.
\end{proof}
\begin{theorem}
The quantum capacity of the channel correlated with the AKLT ground state is given by
\begin{equation*}
Q^c_{\text{AKLT}} = \log_2 3 - h_2(\theta)
\end{equation*} 	
where $h_2(\theta) = -\sin^2(\theta)\log(\sin^2(\theta))  -\cos^2(\theta)\log(\cos^2(\theta))$ 
\end{theorem}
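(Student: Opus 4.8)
The plan is to feed the explicit diagonal spectrum from Lemma~\ref{lem:AKLTlemma} into the capacity formula. By (\ref{eq:CapacityFormula}) together with the identification of $\lim_{n\to\infty} S(\diag(\rho_{env}))/n$ with the entropy rate $\lim_{n\to\infty} H(X_1,\dots,X_n)/n$ of the process (\ref{eq:classicaldist}) (the Lemma stated just before Lemma~\ref{lem:AKLTlemma}), and since the AKLT chain is a spin-$1$ chain so that $d=3$, it is enough to prove that this entropy rate equals $h_2(\theta)$.

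Write $s=\sin^2(\theta)$ and $c=\cos^2(\theta)$, so $s+c=1$. By Lemma~\ref{lem:AKLTlemma} the distribution of $(X_1,\dots,X_n)$ consists of the value $\lambda_p = s^p c^{n-p}/2$ with multiplicity $2\binom{n}{p}$ for $p\in\{0,\dots,n-1\}$, together with the value $\lambda_n = s^n$ with multiplicity $1$. I would first record the normalization as a consistency check: $\sum_{p=0}^{n-1} 2\binom{n}{p}\lambda_p + \lambda_n = \sum_{p=0}^{n}\binom{n}{p}s^p c^{n-p} = (s+c)^n = 1$, so this is indeed a probability vector.

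Then I would expand $H(X_1,\dots,X_n) = -\sum_{p=0}^{n-1} 2\binom{n}{p}\lambda_p\log\lambda_p - \lambda_n\log\lambda_n$, using (with $\log=\log_2$) the expansions $\log\lambda_p = p\log s + (n-p)\log c - 1$ for $p<n$ and $\log\lambda_n = n\log s$. Grouping the part linear in the exponents separately from the constant $-1$, the sums that appear are exactly the total mass and the mean of the binomial law: $\sum_{p=0}^{n}\binom{n}{p}s^p c^{n-p}=1$, $\sum_{p=0}^{n}p\binom{n}{p}s^p c^{n-p}=ns$, and $\sum_{p=0}^{n}(n-p)\binom{n}{p}s^p c^{n-p}=nc$. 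After adding and subtracting the $p=n$ index so that these identities apply over the full range, the $s^n\log s$ contributions cancel against $\lambda_n\log\lambda_n$, and one is left with the closed form
\begin{equation*}
H(X_1,\dots,X_n) = -n\bigl(s\log s + c\log c\bigr) + 1 - s^n = n\,h_2(\theta) + 1 - s^n .
\end{equation*}
Dividing by $n$ and using $0\le s^n\le 1$ gives $(1-s^n)/n\to 0$, so the entropy rate is $h_2(\theta)$ and $Q^c_{\text{AKLT}} = \log_2 3 - h_2(\theta)$.

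The computation is essentially routine; the one place that needs care is the exceptional entry $\lambda_n$, which has multiplicity $1$ rather than $2\binom{n}{n}=2$ and is not divided by $2$. One must treat the index $p=n$ outside the binomial sums, reinsert it to invoke the standard identities, and then verify that the resulting leftover terms are $O(1)$ in $n$ and hence vanish after the $1/n$ normalization. It is also worth stating explicitly that $\log$ means $\log_2$ here, consistent with the $\log_2 3$ in the statement, since it is precisely the $-\log 2 = -1$ term that produces the asymptotically negligible $1-s^n$ correction to $n\,h_2(\theta)$.
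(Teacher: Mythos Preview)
Your proof is correct and follows essentially the same route as the paper: feed the spectrum from Lemma~\ref{lem:AKLTlemma} into~(\ref{eq:CapacityFormula}), expand $\log\lambda_p$, and reduce the resulting sums to the total mass and mean of a binomial distribution, so that only the $O(1)$ remainder $(1-s^n)$ survives before the $1/n$ limit. Your handling of the exceptional $p=n$ term and the normalization check are slightly more explicit than the paper's, but the argument is the same.
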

\begin{proof}
From equation ( \ref{eq:CapacityFormula}) we have 
\begin{equation*}
Q^c_{\text{AKLT}} = \log_2 3 - \lim_{n\rightarrow\infty}\frac{1}{n}\big[-\sum_{k=0}^n \lambda_k \log_2 \lambda_k\big] 
\end{equation*}
where $\lambda_k$ is as in lemma \ref{lem:AKLTlemma}. Setting $\sin^2(\theta) = p$ and $\cos^2(\theta)=q$ we get
\begin{eqnarray*}
&& Q^c_{\text{AKLT}} =\\ 
&& \quad\log_2 3 + \lim_{n\rightarrow\infty}\frac{1}{n} \sum_{k=0}^{n-1}\Big[ 2\binom{n}{k} \frac{p^kq^{n-k}}{2}\log_2(\frac{p^{k}q^{n-k}}{2}) +\\
&& \qquad\qquad\qquad\qquad\qquad\qquad p^n\log_2p^n \Big]
\end{eqnarray*}
The term $\frac{p^n\log_2p^n}{n}$ goes to zero as $n\rightarrow\infty$. Hence, we get
\begin{eqnarray*} 
&&Q^c_{\text{AKLT}} = \log_2 3 + \lim_{n\rightarrow\infty}\frac{1}{n}\Big[\sum_{k=0}^n k \binom{n}{k}p^k q^{n-k}\log_2 p + \\
									&&\sum_{k=0}^n (n-k) \binom{n}{k}p^k q^{n-k}\log_2 q - \sum_{k=0}^{n-1} \binom{n}{k}p^k q^{n-k}\Big] \\
									&=& \log_2 3 + \lim_{n\rightarrow\infty}\frac{1}{n}\Big[\log_2 p\sum_{k=0}^n k \binom{n}{k} p^kq ^{n-k} +\\ 
									 &&  \log_2 q\sum_{k=0}^n (n-k) \binom{n}{k}p^k q^{n-k} - \sum_{k=0}^{n-1} \binom{n}{k}p^k q^{n-k} \Big]\\
									&=&  \log_2 3 + \lim_{n\rightarrow\infty}\frac{1}{n}\Big[\log_2 p\sum_{k=0}^n k \binom{n}{k} p^kq ^{n-k} + \\  && \qquad\qquad \log_2 q\sum_{j=0}^n j \binom{n}{k}q^j p^{n-j} - \sum_{k=0}^{n-1} \binom{n}{k}p^k q^{n-k} \Big]\\
									&=& \log_2 3 + \lim_{n\rightarrow\infty}\frac{1}{n}\Big(np\log_2 p  + 	nq\log_2 q)  - (1-p^n)\Big).\\
\end{eqnarray*}
The last equality follows from the fact that expectation of a binomial random variable with parameters $(n,p)$ is $np$ and the fact that the sum of the binomial probabilities is $1$. Hence
\begin{eqnarray*}
Q^c_{\text{AKLT}}	&=& \log_2 3 + p\log_2 p  +q \log_2 q - \lim_{n\rightarrow\infty} \frac{1-p^n}{n} \\
									&=& \log_2 3 - h_2(\theta)
\end{eqnarray*}
In particular, the capacity of the channel for the ground state of the AKLT Hamiltonian given by equation (\ref{eq:HamAKLT}) is $\log_2 3 - h_2(\cos^{-1}{\sqrt{\frac{2}{3}}}) = \frac{2}{3}$.
\end{proof}

\begin{lemma}
\label{lem:MGlemma}
The non-zero diagonal elements  of  $\rho_{\text{MG}}^{(n)}(\theta)$ are given by:\\ 
\underline{If $n$ is odd}:
\begin{eqnarray*}
\mu_0 &=& \frac{(1-g)^{\ceil*{\frac{n}{2}}} +g^{\ceil*{\frac{n}{2}}}}{2} \quad \text{with multiplicity $2$}\\
\mu_i &=& \frac{(1-g)^{i}g^{\ceil*{\frac{n}{2}}-i}}{2} \quad \text{with multiplicity $2\binom{\ceil*{\frac{n}{2}}}{\ceil*{\frac{n}{2}} -i}$}\\
 &&\qquad\qquad\qquad\qquad\qquad \text{for $1\leq i \leq \ceil*{\frac{n}{2}}-1$}\\
\end{eqnarray*}
\underline{If $n$ is even}:
\begin{eqnarray*}
\gamma_0 &=& \frac{(1-g)^{\frac{n}{2}} +g^{\frac{n}{2}+1}}{2} \quad \text{with multiplicity $1$}\\
\gamma_i^{(1)} &=& \frac{g^i(1-g)^{\frac{n}{2}-i}}{2} \quad \text{with multiplicity $\binom{\frac{n}{2}}{i}$}\\
&&\qquad\qquad\qquad\qquad\qquad\text{for $1\leq i \leq \frac{n}{2}-1$}\\
\gamma_i^{(2)} &=& \frac{g^{\frac{n}{2}-i+1}(1-g)^i}{2} \quad \text{with multiplicity $\binom{\frac{n}{2}+1}{i}$}\\
&& \qquad\qquad\qquad\qquad\qquad \text{for $1\leq i \leq \frac{n}{2}$}\\
\gamma_{\frac{n}{2}} &=& \frac{(1-g)^{\frac{n}{2}+1} +g^{\frac{n}{2}}}{2} \quad \text{with multiplicity $1$}
\end{eqnarray*}
\end{lemma}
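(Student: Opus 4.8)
The plan is to mimic the proof of Lemma~\ref{lem:AKLTlemma}: use cyclicity of the trace together with a short list of algebraic identities to turn the diagonal entries of $\rho_{\text{MG}}^{(n)}$ into a combinatorial count, and then enumerate. By equation~(\ref{eq:diagMPS}) and cyclicity, the diagonal entry indexed by a word $(i_1,\dots,i_n)\in\{1,2\}^n$ is $\Tr(\rho\,O_1\cdots O_nO_n^{\dagger}\cdots O_1^{\dagger})$, where $O_k\in\{A_1,A_2\}$ are the operators of equation~(\ref{eq:OpMG}) and $\rho=\diag\!\big(\tfrac{1-g}{2},\tfrac12,\tfrac g2\big)$. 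Writing $P_1,P_2,P_3$ for the diagonal matrix units $E_{11},E_{22},E_{33}$, a direct check gives
\begin{eqnarray*}
A_1A_1^{\dagger} &=& P_1+gP_2,\quad A_2A_2^{\dagger} = (1-g)P_2+P_3,\\
A_1P_1A_1^{\dagger} &=& 0,\quad A_1P_2A_1^{\dagger} = P_1,\quad A_1P_3A_1^{\dagger} = gP_2,\\
A_2P_1A_2^{\dagger} &=& (1-g)P_2,\quad A_2P_2A_2^{\dagger} = P_3,\quad A_2P_3A_2^{\dagger} = 0.
\end{eqnarray*}
Consequently, evaluating $O_1\cdots O_nO_n^{\dagger}\cdots O_1^{\dagger}$ from the innermost pair outward keeps it a non-negative combination $c_1P_1+c_2P_2+c_3P_3$, so the diagonal entry equals $\tfrac{1-g}{2}c_1+\tfrac12c_2+\tfrac g2c_3$.

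These identities describe a single \emph{token} moving on the path graph on vertices $1\!-\!2\!-\!3$. Applying the operations in the order $O_n,O_{n-1},\dots,O_1$: an $A_1$-step moves a token $2\mapsto1$ (weight $1$) and $3\mapsto2$ (weight $g$) and destroys a token at $1$; an $A_2$-step moves a token $1\mapsto2$ (weight $1-g$) and $2\mapsto3$ (weight $1$) and destroys a token at $3$. Since tokens starting at different vertices follow opposite parity schedules (the one from $2$ visits the center at even times, the ones from $1,3$ at odd times) they never collide, so the diagonal entry equals $\sum_{j}\rho_{j'j'}\,w_j$, summed over the starting vertices $j\in\{1,2,3\}$ whose walk survives, $w_j$ being its accumulated weight and $j'$ its terminal vertex. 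Because a token at an end vertex has a forced next move, a surviving walk alternates between the center $2$ and the ends $\{1,3\}$; in particular the first step $O_n$ already destroys the token at $1$ (if $O_n=A_1$) or at $3$ (if $O_n=A_2$), so at most two of the three starts survive.

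I would then enumerate surviving walks. Such a walk is a concatenation of complete excursions $2\!\to\!\text{end}\!\to\!2$, each contributing a factor $g$ or $1-g$ according to whether it visits $3$ or $1$, possibly preceded by a forced incoming half-step $\text{end}\to2$ (when the walk starts at an end, contributing another $g$ or $1-g$) and possibly followed by an outgoing half-step $2\to\text{end}$ of weight $1$ (depending on $n$ and the starting vertex). Counting the complete excursions, recording which half-steps occur, and reading off the terminal vertex $j'$, one obtains exactly the monomials $\tfrac12 g^{a}(1-g)^{b}$ listed as $\mu_i$ (for $n$ odd) and $\gamma_i^{(1)},\gamma_i^{(2)}$ (for $n$ even); in the even case the total degree $a+b$ is $\tfrac n2$ or $\tfrac n2+1$ according as $j'=2$ or $j'$ is an end, which is precisely what splits the spectrum into two families. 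The two-term values $\mu_0$, $\gamma_0$, $\gamma_{n/2}$ arise only from the two alternating words $A_1A_2A_1\cdots$ and $A_2A_1A_2\cdots$: for these, two starting vertices survive and one gets a pure $(1-g)$-power plus a pure $g$-power, which are added.

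The remaining task, and the main obstacle, is the multiplicity count. A word whose center-token survives is determined freely by its $\lceil n/2\rceil$ excursion directions, and a word surviving from an end vertex by its $\lfloor n/2\rfloor$ free directions plus a forced first letter; grouping words by the monomial produced and using $\binom{m-1}{i-1}+\binom{m-1}{i}=\binom{m}{i}$ with $m=\lceil n/2\rceil$, the number of words yielding a fixed $\mu_i$, $\gamma_i^{(1)}$ or $\gamma_i^{(2)}$ collapses to the stated binomial coefficient, and an inclusion--exclusion over the three survival events checks that the total multiplicity equals the number of words possessing a surviving token, so nothing is missed. The delicate point is that a surviving walk is weighted by $\rho$ at its \emph{terminal} vertex rather than its starting one, so one must track where each walk ends; the odd-$n$ case is cleaner because an odd-length surviving walk ends at $2$ when it starts at an end and at an end when it starts at $2$, whereas for even $n$ both kinds of walk end at an end, which is why the even spectrum carries the extra $\gamma^{(2)}$ family and the asymmetric $\gamma_0,\gamma_{n/2}$.
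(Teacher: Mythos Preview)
Your approach is correct and takes a genuinely different route from the paper's proof. The paper proceeds by brute-force induction on $n$: it verifies the statement for $n=2,3$, then shows that for $n\ge 4$ the matrix $O_1\cdots O_nO_n^{\dagger}\cdots O_1^{\dagger}$ always belongs to one of eight explicit diagonal types (labelled $Z_n,B_n,C_n(i),D_n,E_n(i),F_n,G_n(i),H_n$), computes how conjugation by $A_1$ or $A_2$ permutes these types, extracts from this a system of coupled first-order recurrences for the eight multiplicities, and solves them to obtain the stated binomial coefficients. Your token-walk interpretation replaces all of this with a direct combinatorial picture: each word encodes a nearest-neighbour walk on the path $1$--$2$--$3$ with absorption at the endpoints, the diagonal entry is the $\rho$-weighted sum over surviving starting vertices, and the multiplicities are read off from the free choices of excursion direction. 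This is more conceptual---it \emph{explains} why the answer is binomial and why exactly the two alternating words produce the two-term eigenvalues (both tokens survive precisely when every consecutive pair of letters differs)---whereas the paper's argument is more mechanical but fully explicit. Your sketch of the multiplicity count via Pascal's rule is sound: for $n$ odd, the centre token contributes $\binom{m-1}{j-1}+\binom{m-1}{j}=\binom{m}{j}$ words to $\mu_j$ (splitting on the final half-step) and the end token contributes another $\binom{m}{j}$, giving $2\binom{m}{j}$. The only bookkeeping you have not written out is the even-$n$ case, where both surviving tokens terminate at an endpoint and one must track which endpoint to recover the split into the $\gamma^{(1)}$ and $\gamma^{(2)}$ families, but this follows the same pattern.
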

\begin{proof}
See appendix.
\end{proof}
\begin{theorem}
The quantum capacity of the channel correlated with the Majumdar-Ghosh ground state is given by
\begin{equation*}
Q^c_{\text{MG}} = 1 + \frac{g}{2}\log_2 (g) + \frac{1-g}{2}\log_2 (1-g) 
\end{equation*} 	
\end{theorem}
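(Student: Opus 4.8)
The plan is to compute the entropy rate $\lim_{n\to\infty}\frac{1}{n}H(X_1,\dots,X_n)$ of the classical process whose distribution is the diagonal of $\rho_{\text{MG}}^{(n)}$, using the explicit list of diagonal elements and their multiplicities provided by Lemma~\ref{lem:MGlemma}, and then invoke the capacity formula (\ref{eq:CapacityFormula}) with $d=2$, so that $Q^c_{\text{MG}} = 1 - \lim_{n\to\infty}\frac{1}{n}S(\text{diag}(\rho_{\text{MG}}^{(n)}))$. Since the limit exists by stationarity, it suffices to evaluate it along the subsequence of even $n$, say $n=2m$; this is the cleaner case because the multiplicities $\binom{m}{i}$ and $\binom{m+1}{i}$ are genuine binomial coefficients and the $\gamma_i^{(1)},\gamma_i^{(2)}$ families have the shape of (unnormalized) binomial probabilities in the variable $g$.

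First I would write $-\frac{1}{n}\sum \lambda\log_2\lambda$ as a sum of three pieces coming from $\gamma_0,\gamma_{n/2}$ (the two isolated terms), the family $\gamma_i^{(1)}$, and the family $\gamma_i^{(2)}$. The two isolated terms are $O(1)$ contributions to $S$, hence $O(1/n)$ after dividing by $n$, and vanish in the limit. For the $\gamma_i^{(1)}$ family, $\sum_{i}\binom{m}{i} g^i(1-g)^{m-i} = 1$ exactly (binomial theorem), so $\sum_i \binom{m}{i}\gamma_i^{(1)} = \frac{1}{2}$ up to the missing $i=0$ and $i=m$ endpoint terms, which are again $O(1)$; and $-\sum_i \binom{m}{i}\gamma_i^{(1)}\log_2\gamma_i^{(1)}$ expands, using $\log_2\gamma_i^{(1)} = i\log_2 g + (m-i)\log_2(1-g) - 1$, into $\log_2 g$ times $\sum_i i\binom{m}{i}g^i(1-g)^{m-i} = mg$, plus $\log_2(1-g)$ times $\sum_i(m-i)\binom{m}{i}g^i(1-g)^{m-i} = m(1-g)$, plus lower-order terms — exactly the computation done in the AKLT proof, driven by the mean $mg$ of a $\mathrm{Binomial}(m,g)$ variable. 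Dividing by $n=2m$ gives a contribution $\tfrac{1}{2}\big(\tfrac{g}{2}\log_2 g + \tfrac{1-g}{2}\log_2(1-g)\big)$-type term; I would do the bookkeeping carefully here. The $\gamma_i^{(2)}$ family is handled identically with $m$ replaced by $m+1$ and the roles of $g$ and $1-g$ swapped inside $\log_2\gamma_i^{(2)}$, so $m+1 \sim m \sim n/2$ to leading order and it contributes the mirror-image amount; summing the two families and halving (because each family is weighted by $\tfrac12$ from the $\rho$ normalization) should collapse to $-\big(\tfrac{g}{2}\log_2 g + \tfrac{1-g}{2}\log_2(1-g)\big)$, giving the claimed $Q^c_{\text{MG}} = 1 + \tfrac{g}{2}\log_2 g + \tfrac{1-g}{2}\log_2(1-g)$.

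Finally I would check consistency by running the odd-$n$ case, $n=2m+1$, where $\ceil{n/2} = m+1$ and the single family $\mu_i = \tfrac12(1-g)^i g^{m+1-i}$ with multiplicity $2\binom{m+1}{m+1-i} = 2\binom{m+1}{i}$ appears; the same binomial-mean argument with parameter $(m+1,1-g)$ yields the identical limit, confirming the subsequence limits agree (as they must, the process being stationary). The main obstacle is purely organizational: keeping track of which terms are genuinely $O(1)$ (the isolated $\gamma_0,\gamma_{n/2},\mu_0$ and the missing binomial endpoints) versus $O(n)$, ensuring the normalization factors of $\tfrac12$ and the multiplicity factors combine correctly so that the total probability sums to $1$ up to $O(1/n)$, and confirming that the two families in the even case are weighted so as not to double-count. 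There is no analytic subtlety — $h_2$-type functions are smooth on $(0,1)$ — so once the counting is right the formula, and its analyticity in $g$, follow immediately; evaluating at $g=\tfrac12$ recovers the ground-state value $1 - 1 = \tfrac12 + \tfrac12\log_2\tfrac12\cdot\ldots$ as a sanity check.
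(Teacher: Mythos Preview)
Your proposal is correct and follows essentially the same route as the paper: both arguments invoke Lemma~\ref{lem:MGlemma}, split into odd and even $n$, observe that the isolated terms $\mu_0,\gamma_0,\gamma_{n/2}$ contribute $O(1/n)$ and hence vanish in the limit, and reduce the remaining family sums to the binomial mean identity $\sum_i i\binom{m}{i}g^i(1-g)^{m-i}=mg$ to extract the leading $\frac{g}{2}\log_2 g + \frac{1-g}{2}\log_2(1-g)$ term. The only cosmetic difference is that the paper treats the odd case first and writes out both cases in full, whereas you lead with even $n$ and appeal to stationarity to argue that either subsequence determines the limit; both are equally valid.
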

\begin{proof}
We consider the cases of $n$ being odd and even separately.\\
\underline{Case 1:} n is odd \\
Setting $k=\ceil*{\frac{n}{2}}$ from lemma \ref{lem:MGlemma} we get
\begin{eqnarray*}
&& Q^c_{\text{MG}} =\\
&&\log_2 2 +\lim_{n\rightarrow\infty}\frac{1}{n}\Big[ 2\frac{(1-g)^k +g^{k}}{2}\log_2 \frac{(1-g)^k +g^{k}}{2}  + \\
&& \qquad\qquad\sum_{i=1}^k 2\binom{k}{i}\frac{g^i(1-g)^{k-i}}{2}\log \frac{g^i(1-g)^{k-i}}{2}\Big]
\end{eqnarray*}
Rearranging the terms we get
\begin{eqnarray*}
&=& 1 + \lim_{n\rightarrow\infty}\frac{1}{n}\Big[\sum_{i=0}^{k} \binom{k}{i}\frac{g^i(1-g)^{k-i}}{2}\\
&&\qquad\qquad\log \frac{g^i(1-g)^{k-i}}{2}\Big]\\
&=& 1 + \lim_{n\rightarrow\infty}\frac{1}{n}\Big[ \log_2g\sum_{i=0}^k i \binom{k}{i} g^i (1-g) ^{k-i} + \\  
&& \qquad \log_2 (1-g)\sum_{i=0}^k (k-i) \binom{k}{i}g^i (1-g)^{k-i} -  \\
&& \qquad\qquad \log_2 2\sum_{k=0}^{k} \binom{k}{i}g^i (1-g)^{k-i} \Big] 
\end{eqnarray*}
The summation in the first two terms inside the limit is expectation of a binomial random variable with parameters $(k,g)$ and $(k,1-g)$ respectively. The summation of last term is just the sum of the binomial probabilities which is equal to one. Thus, we have
\begin{eqnarray*}
&& Q^c_{\text{MG}} = \\
&&  1 + \lim_{n\rightarrow\infty}\frac{1}{n}\Big[kg\log_2 g + (1-g)k\log_2 (1-g) - \log_2 2 \Big]\\
&=& 1 + \lim_{n\rightarrow\infty}\frac{1}{n}\Big[\ceil*{\frac{n}{2}}g\log_2 g + \ceil*{\frac{n}{2}}(1-g)\log_2 (1-g) \Big]\\
&=& 1 + \frac{g}{2}\log_2 g + \frac{1-g}{2}\log_2 (1-g).
\end{eqnarray*}
\underline{Case 2:} n is even \\
Using lemma \ref{lem:MGlemma}, we get
\begin{eqnarray*}
Q^c_{\text{MG}} &=& \log_2 2 +\lim_{n\rightarrow\infty}\frac{1}{n}\Big( \gamma_0\log_2 \gamma_0 +\sum_{i=1}^{\frac{n}{2}-1} \gamma_i^{(0)} \log_2 \gamma_i^{(0)} +\\
&&\qquad\qquad \sum_{i=1}^{\frac{n}{2}} \gamma_i^{(1)} \log_2 \gamma_i^{(1)} +  \gamma_{\frac{n}{2}} \log_2 \gamma_{\frac{n}{2}}\Big)\\
\end{eqnarray*}
Rearranging the terms of $\gamma_0$ and $\gamma_{\frac{n}{2}}$ we can write this as
\begin{eqnarray*}
&& Q^c_{\text{MG}} =\\
&& 1 + \lim_{n\rightarrow\infty}\frac{1}{n}\Big[\sum_{i=0}^{\frac{n}{2}} \binom{\frac{n}{2}}{i}\frac{g^i(1-g)^{\frac{n}{2}-i}}{2}\log_2 \frac{g^i(1-g)^{\frac{n}{2}-i}}{2} + \\
&&\qquad \sum_{i=0}^{\frac{n}{2} +1} \binom{\frac{n}{2} +1}{i}\frac{g^{\frac{n}{2}-i+1}(1-g)^i}{2} \log_2 \frac{g^{\frac{n}{2}-i+1}(1-g)^i}{2}\Big]\\
\end{eqnarray*}
Proceeding along the lines of the odd $n$ case we see that 
\begin{eqnarray*}
Q^c_{\text{MG}} &=& 1 + \lim_{n\rightarrow\infty}\frac{1}{n}\Big[\frac{n}{4} (g \log_2 g + (1-g) \log_2 (1-g) + 1) \\
&& + \quad  \frac{n+1}{4}\big(g \log_2 g + (1-g) \log_2 (1-g) + 1\big)\Big] \\
&=& 1 + \frac{g}{2}\log_2 (g) + \frac{1-g}{2}\log_2 (1-g)
\end{eqnarray*}
\end{proof}
For $g=\frac{1}{2}$ (Majumdar-Ghosh Hamiltonian) given by equation (\ref{eq:HamAKLT}) the capacity is $1 -\frac{1}{4} -\frac{1}{4} = \frac{1}{2}$.

\section{Conclusion}
In this paper we have computed the quantum capacity of a quantum memory channel where the noise correlations come from a spin chain external environment. In particular the environment that we consider are parametrized ground states of the AKLT and  the MG spin chains. These are well known matrix product states and we have used the formalism and mechanics of MPS to compute exact formulas for the capacities of these quantum memory channels. We also observed that the diagonal elements of translation invariant MPS correspond to the probability distribution of a stationary stochastic process. A question that naturally arises is whether the diagonal elements of an MPS correspond to some special stochastic processes that will make the computation of the quantum capacity of the correlated memory channels tractable. Another  question of interest is whether the methods in this work can be used to compute the quantum capacity of a much larger class of quantum memory channels correlated by MPS. The capacity formulas that we derived were analytic in  the range that we have considered. It would be interesting to investigate the effect on the capacities for other ranges of parameter values, especially where there are known phase transitions. 
\newpage
\appendix
\section{Proof of lemma \ref{lem:MGlemma}}
The lemma can be easily verified for the cases when  $n=2$, and $n=3$.  So we prove for the case $n \geq 4$ with $\{A_1,A_2\}$ as in equation (\ref{eq:OpMG}).\\  
\underline{Proof for values:} \\
The proof is based on the observation that, for $n \geq 4$, 
\begin{eqnarray*}
&&\{O_1 \cdots O_nO_n^\dagger  \cdots O_1^\dagger: O_i \in \{A_1, A_2\}, 1 \leq i \leq n \}  = \\
&& \qquad \Big\{ \diag(0,0,0), \diag(g^{\lfloor \frac{n}{2} \rfloor}, 0, 0), \\
&& \qquad\,\,\,\diag(g^i(1-g)^{\lfloor\frac{n}{2}\rfloor - i}, 0, 0): 1 \leq i \leq \lfloor \frac{n}{2} \rfloor - 1,\\
&& \qquad\,\,\,\diag((1-g)^{\lfloor \frac{n}{2}\rfloor}, g^{\lceil \frac{n}{2} \rceil}, 0), \\
&& \qquad\,\,\, \diag(0, g^i(1-g)^{\lceil\frac{n}{2}\rceil - i}, 0): 1 \leq i \leq \lceil \frac{n}{2} \rceil - 1,\\ 
&& \qquad\,\,\, \diag(0, (1-g)^{\lceil \frac{n}{2} \rceil}, g^{\lfloor\frac{n}{2}\rfloor}), \\
&& \qquad\,\,\,\diag(0,0,g^{\lfloor\frac{n}{2}\rfloor - i}(1-g)^i): 1 \leq i \leq \lfloor\frac{n}{2}\rfloor - 1, \\
&& \qquad\,\,\, \diag(0,0, (1-g)^{\lfloor\frac{n}{2}\rfloor}) \Big\}
\end{eqnarray*}
Consequently, for $n \geq 4$, 
\begin{eqnarray*}
&&\{\Tr(\rho O_1 \cdots O_nO_n^\dagger \cdots O_1^\dagger): O_i \in \{A_1, A_2\}, 1 \leq i \leq n \} = \\
&& \qquad\{0, \displaystyle\frac{(1-g)g^{\lfloor\frac{n}{2}\rfloor}}{2}, \\
&& \qquad\,\,\,\displaystyle\frac{g^i(1-g)^{\lfloor\frac{n}{2}\rfloor - i + 1}}{2}: 1 \leq i \leq \lfloor \frac{n}{2} \rfloor - 1,\\
&& \qquad\,\,\,\displaystyle\frac{(1-g)^{\lfloor\frac{n}{2}\rfloor + 1}+g^{\lceil\frac{n}{2}\rceil}}{2}, \\
&& \qquad\,\,\,\displaystyle\frac{g^i(1-g)^{\lceil\frac{n}{2}\rceil - i}}{2}: 1 \leq i \leq \lceil \frac{n}{2} \rceil - 1, \\
&& \qquad\,\,\,\displaystyle\frac{(1-g)^{\lceil\frac{n}{2}\rceil}+g^{\lfloor\frac{n}{2}\rfloor + 1}}{2}, \\
&& \qquad\,\,\,\displaystyle\frac{g^{\lfloor\frac{n}{2}\rfloor - i +1}(1-g)^i}{2}: 1 \leq i \leq \lfloor\frac{n}{2}\rfloor - 1,\displaystyle\frac{g(1-g)^{\lfloor\frac{n}{2}\rfloor}}{2} \} 
\end{eqnarray*}
Now, the non-zero values for $\Tr(\rho O_1 \cdots O_nO_n^\dagger \cdots O_1^\dagger)$ can be inferred as stated in the lemma. The validity of the observation can be established by mathematical induction on $n$.  For $n=4$, the observation can be verified explicitly.  Now, assuming that the observation is valid for an arbitrary integer $k \geq 4$, we prove that the observation holds for $k+1$. 
Let\\
\begin{tabular}{rcl}
$Z_k$ & denote & $\diag_{_{k \times k}} (0,0,0)$, \\
$B_k$ & denote & $\diag_{_{k \times k}} (g^{\lfloor \frac{k}{2} \rfloor}, 0, 0)$, \\
$C_k(i)$ & denote & $\diag_{_{k \times k}} (g^i(1-g)^{\lfloor\frac{k}{2}\rfloor - i}, 0, 0)$, \\
$D_k$ & denote & $\diag_{_{k \times k}} ((1-g)^{\lfloor \frac{k}{2}\rfloor}$, \\
$E_k(i)$ & denote & $\diag_{_{k \times k}} (0, g^i(1-g)^{\lceil\frac{k}{2}\rceil - i}, 0)$, \\
$F_k$ & denote & $\diag_{_{k \times k}} (0, (1-g)^{\lceil \frac{k}{2} \rceil}, g^{\lfloor\frac{k}{2}\rfloor})$, \\
$G_k(i)$ & denote & $\diag_{_{k \times k}} (0,0,g^{\lfloor\frac{k}{2}\rfloor - i}(1-g)^i)$, \\
and $H_k$ & denote & $\diag_{_{k \times k}} (0,0, (1-g)^{\lfloor\frac{k}{2}\rfloor})$.
\end{tabular}

By induction hypothesis 
\begin{eqnarray*}
& &\{O_1 \cdots O_kO_k^\dagger  \cdots O_1^\dagger: O_i \in \{A_1, A_2\}, 1 \leq i \leq k\}  =\\
& & \qquad \{Z_k, B_k, C_k(i): 1 \leq i \leq \lfloor \frac{k}{2} \rfloor - 1,D_k,\\
& & \qquad E_k(i):1 \leq i \leq \lceil \frac{k}{2} \rceil - 1, F_k,\\
& & \qquad G_k(i):1 \leq i \leq \lfloor\frac{k}{2}\rfloor - 1, H_k\}.  
\end{eqnarray*}
So, 
\begin{eqnarray*}
& & \{A_1O_1 \cdots O_kO_k^\dagger  \cdots O_1^\dagger A_1^\dagger ,  A_2O_1 \cdots O_kO_k^\dagger  \cdots O_1^\dagger A_2^\dagger:\\
& & O_i \in \{A_1, A_2\}, 1 \leq i \leq k \}  = \\
& & \{A_1Z_kA_1^\dagger, A_1B_kA_1^\dagger, A_1C_k(i)A_1^\dagger : 1 \leq i \leq \lfloor \frac{k}{2} \rfloor - 1, \\
& & A_1D_kA_1^\dagger, A_1E_k(i)A_1^\dagger :1 \leq i \leq \lceil \frac{k}{2} \rceil - 1, \\
& & A_1F_kA_1^\dagger, A_1G_k(i)A_1^\dagger :1 \leq i \leq \lfloor\frac{k}{2}\rfloor - 1, \\
& & A_1H_kA_1^\dagger, A_2Z_kA_2^\dagger, A_2B_kA_2^\dagger, \\
& & A_2C_k(i)A_2^\dagger : 1 \leq i \leq \lfloor \frac{k}{2} \rfloor - 1, \\
& & A_2D_kA_2^\dagger, A_2E_k(i)A_2^\dagger :1 \leq i \leq \lceil \frac{k}{2} \rceil - 1, \\
& & A_2F_kA_2^\dagger,  A_2G_k(i)A_2^\dagger :1 \leq i \leq \lfloor\frac{k}{2}\rfloor - 1, \\
& & A_2H_kA_2^\dagger\}
\end{eqnarray*}
Now, since 
\begin{eqnarray}
&& A_1Z_kA_1^{\dagger}, A_1B_kA_1^{\dagger}, A_1C_k(i)A_1^{\dagger},A_2Z_kA_2^{\dagger}, A_2G_k(i)A_2^{\dagger}, \nonumber \\
&& A_2H_kA_2^{\dagger} \quad \text{are all}\, Z_{k+1},\, \text{for} \, 1 \leq i \leq \lfloor \frac{k}{2} \rfloor -1,  \\ 
&& A_1D_kA_1^{\dagger} = B_{k+1}, \\
&& A_1 E_k(i) A_1^{\dagger} = C_{k+1}(i)\, \text{for}\, 1 \leq i \leq \lfloor \frac{k+1}{2} \rfloor - 1, \\
&& A_1 F_k A_1^{\dagger} = D_{k+1},\\
&& A_1H_kA_1^{\dagger}, A_2C_k(1)A_2^{\dagger}\, \text{are} \, E_{k+1}(1),\\	
&& A_1G_k (\lfloor \frac{k}{2} \rfloor -i+1)A_1^{\dagger}, A_2 C_k(i)A_2^{\dagger} \, \text{are}\, E_{k+1}(i)\nonumber\\
&& \text{for} \, 2 \leq i \leq \lceil \frac{k+1}{2} \rceil -2, \\
&& A_1G_k(1)A_1^{\dagger}, A_2B_kA_2^{\dagger}\, \text{are}\, E_{k+1}(\lceil \frac{k+1}{2} \rceil - 1),\\
&& A_2 D_kA_2^{\dagger} = F_{k+1},\\
&& A_2 E_k(i)A_2^{\dagger} is G_{k+1}(i)\, \text{for}\, 1 \leq i \leq \lfloor \frac{k+1}{2} \rfloor - 1,\\
&& A_2F_kA_2^{\dagger} = H_{k+1}
\end{eqnarray}
we get 
\begin{eqnarray*}
& & \{A_1O_1 \cdots O_kO_k^\dagger  \cdots O_1^\dagger A_1^\dagger ,  A_2O_1 \cdots O_kO_k^\dagger  \cdots O_1^\dagger A_2^\dagger: \\
& & \qquad O_i \in \{A_1, A_2\}, 1 \leq i \leq k \}  =\\
& & \qquad\qquad \{Z_{k+1}, B_{k+1}, \\ 
& & \qquad\qquad C_{k+1}(i): 1 \leq i \leq \lfloor \frac{k+1}{2} \rfloor - 1, D_{k+1}, \\ 
& & \qquad\qquad E_{k+1}: 1 \leq i \leq \lceil \frac{k+1}{2} \rceil -1,F_{k+1},  \\ 
& & \qquad\qquad G_{k+1}: 1 \leq i \leq \lfloor \frac{k+1}{2} \rfloor - 1, H_{k+1} \}  
\end{eqnarray*}
\underline{Proof of multiplicity:} \\
Let $z_n$, $b_n$, $c_n(i)$, $d_n$, $e_n(i)$, $f_n$, $g_n(i)$, and $h_n$ denote the multiplicity of $Z_n$, $B_n$, $C_n(i)$, $D_n$, $E_n(i)$, $F_n$, $G_n(i)$, and $H_n$, respectively, in $\{ O_1 O_2 \cdots O_n O_n^{\dagger} O_{n-1}^{\dagger} \cdots O_1^{\dagger}: O_i \in\{A_1, A_2\}, 1 \leq i \leq n\}$. From the relations A1-A10 we obtain the following recurrence relations:\\ For $n\geq 5$,
\begin{eqnarray*}
& &z_{n+1}  =  2z_n+b_n+\sum_{i=1}^{\lfloor \frac{n}{2} \rfloor -1} c_n(i) + \sum_{i=1}^{\lfloor \frac{n}{2} \rfloor -1} g_n(i)  + h_n  \\
& & b_{n+1}  =  d_n \\
& & c_{n+1}(i)  =  e_n(i): 1 \leq i \leq \lfloor \frac{n+1}{2} \rfloor - 1, \\
& & d_{n+1}  =  f_n,  \\
& & e_{n+1}(1)   =  h_n + c_n(1), \\
& & e_{n+1}(i)  =  g_n(\lfloor \frac{n}{2} \rfloor - i +1) +c_n(i): 2 \leq i \leq \lceil \frac{n+1}{2} \rceil - 2, \\
& & e_{n+1}(\lceil \frac{n+1}{2} \rceil - 1)  =  g_n(1) + b_n, \\
& & f_{n+1}  =  d_n,  \\
& & g_{n+1}(i)  =  e_n(\lceil \frac{n}{2} \rceil - i): 1 \leq i \leq \lfloor \frac{n+1}{2} \rfloor - 1, \\
& & h_{n+1}  =  f_n
\end{eqnarray*}
with initial conditions
\begin{eqnarray*}
&& z_4 = 6, b_4 = 1, c_4(1)=2, d_4 =  1, e_4 (1)  =  1, \\
&& f_4  =  1, g_4(1)=2,h_4 = 1
\end{eqnarray*}
Solving the above recurrences with the given initial conditions result in 
\begin{eqnarray*}
z_n & = & 2^n - 2^{\lfloor \frac{n}{2} \rfloor + 1} - 2^{\lceil \frac{n}{2} \rceil} + 2, n \geq 4 \\
b_n & = & 1, n \geq 4 \\
c_n(i) & = & \left( \begin{array}{cc} \lfloor \frac{n}{2} \rfloor \\ i \end{array} \right): 1 \leq i \leq \lfloor \frac{n}{2} \rfloor - 1, n \geq 4 \\
d_n & = & 1, n \geq 4 \\
e_n(i) & = & \left( \begin{array}{cc} \lceil \frac{n}{2} \rceil \\ i \end{array} \right): 1 \leq i \leq \lceil \frac{n}{2} \rceil - 1, n \geq 4 \\
f_n & = & 1, n \geq 4 \\
g_n(i) & = & \left( \begin{array}{cc} \lfloor \frac{n}{2} \rfloor \\ i \end{array} \right): 1 \leq i \leq \lfloor \frac{n}{2} \rfloor - 1, n \geq 4 \\
h_n & = & 1, n \geq 4 
\end{eqnarray*}
Now that we have identified the distinct values along with multiplicities for $O_1O_2 \cdots O_nO_n^\dagger O_{n-1}^\dagger \cdots O_1^\dagger$ with $O_i \in \{A_1, A_2\}$ and $1 \leq i \leq n$, we can consider the cases of $n$ being odd or even to conclude the lemma. 
\bibliography{QMFCSChannel}
\end{document}